 \def\map#1{\mathcal #1}
\def\d{\operatorname{d}}\def\<{\langle}\def\>{\rangle}
\def\Tr{\operatorname{Tr}}\def\:{\hbox{\bf
    :}}
\def\grp#1{\mathsf{#1}}
\DeclareMathOperator\erf{erf}
\def\set#1{\mathsf{#1}}
\newtheorem{theo}{{Theorem}}
\newtheorem{lem}{{Lemma}}
\newtheorem{prop}{{Proposition}}
\begin{document}


\title{Quantum Stopwatch: How To Store Time in a Quantum Memory}
\author{Yuxiang Yang}
\affiliation{Department of Computer Science, The University of Hong Kong, Pokfulam Road, Hong Kong}
\affiliation{HKU Shenzhen Institute of Research and Innovation Yuexing 2nd Rd Nanshan, Shenzhen 518057, China}
\author{Giulio Chiribella} 
\affiliation{Department of Computer Science, The University of Oxford,  Parks Road, Oxford, UK}
\affiliation{Department of Computer Science, The University of Hong Kong, Pokfulam Road, Hong Kong}
\affiliation{Canadian Institute for Advanced Research, CIFAR Program in Quantum Information Science, Toronto, ON M5G 1Z8}

\author{Masahito Hayashi}
\affiliation{Graduate School of Mathematics, Nagoya University, Nagoya, Japan}
\affiliation{Centre for Quantum Technologies, National University of Singapore, Singapore}

\keywords{quantum clocks, data compression, quantum metrology}

\begin{abstract}

Quantum mechanics imposes a fundamental tradeoff between the accuracy of time measurements and the size of the systems used as clocks.  
    When the measurements of different time intervals are combined,   the  errors due to the finite clock size accumulate, resulting in  an overall  inaccuracy that grows with   the complexity of the setup.      
 Here we   introduce a method  that in principle eludes the accumulation of errors by coherently transferring   information from  a quantum clock  to a quantum memory of the smallest possible size.
 Our method could be used to  measure the total duration of a sequence of events with enhanced accuracy, and to reduce the  amount of quantum communication needed to stabilize  clocks in a quantum network.
\end{abstract}
\maketitle

\section{Introduction}

Accurate time measurements are important  in a variety  of  applications, including  GPS systems \cite{klepczynski1996gps}, frequency standards \cite{fs}, and astronomical observations \cite{astronomy1,astronomy2}.  But the accuracy of time measurements is not just a  technological issue.    
At the most fundamental level, every clock is subject to an unavoidable  quantum limit, which cannot be overcome even with the most advanced technology.  The limit has its roots in Heisenberg's uncertainty principle, which  implies  fundamental bounds on the accuracy  of time measurements    \cite{mandelstam1945uncertainty,helstrom,holevo}.    
For  an individual time measurement, the ultimate quantum limit  can be attained by initializing the clock in a  suitably engineered  superposition of energy levels \cite{buzek-temporal,mixed-clock1,mixed-clock2}.  However, the situation is different when multiple time measurements are performed on the same clock (e.~g.~in order to measure the total duration of a sequence of events) or on different clocks (e.~g.~in   GPS technology).  In these scenarios the errors accumulate, resulting in an inaccuracy that grows linearly with the  number of measurements. 
   To address this problem,  one may try to minimise  the number of  measurements: instead of measuring individual clocks, one could store their state into the  memory of a quantum computer, process all  the data  coherently, and finally read out the  desired information with a  single  measurement.  
   However,    quantum memories  are notoriously   expensive and hard to scale up   \cite{memory}.    
This leads to the  fundamental  question:  how much memory is required to record time at the quantum level?

 \begin{figure}[t]
 \centering
\includegraphics[width=0.5\linewidth]{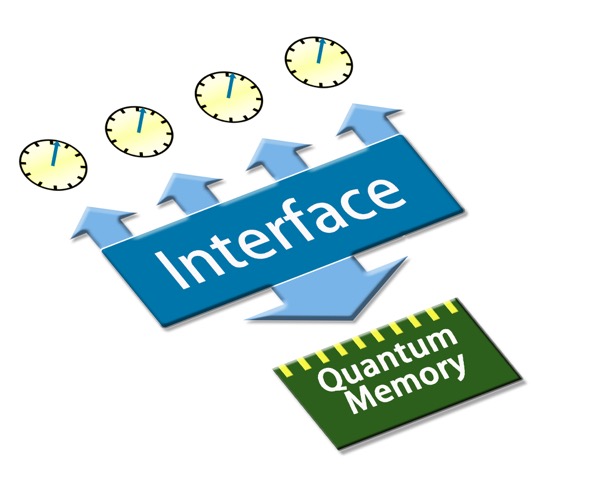} \caption{{\bf Working principle of the  quantum stopwatch.}
The quantum stopwatch   coherently transfers time information from a quantum clock, consisting of many identical particles, to a quantum memory of minimum size. 
} \label{fig:clock} 
\end{figure}

Here we derive the ultimate quantum limit on the amount of memory needed to record time   with a prescribed  accuracy.  The limit is based on a Heisenberg-type bound, expressing the tradeoff between the accuracy in the read-out of a given parameter and the size of the system in which the parameter is encoded. 
 We show that the bound is tight, by constructing a  protocol that faithfully transfers information from the system  to a quantum memory of minimal size. The protocol, which we call {\em quantum stopwatch},  freezes the time evolution of a clock by storing its state  into the state of the memory, as in   Figure \ref{fig:clock}.   The quantum stopwatch protocol works with clocks made of many identical and independently prepared particles,  a common setting  when the clocks are identical atoms or ions \cite{qubit-number}. The use of identical particles can also be thought as a simple repetition code for transmitting time information.   
  Since our protocol uses identically prepared particles,  the optimal scaling with the memory size is robust to depolarization of the clocks and to particle loss. 
   

   Storing time coherently  into a quantum memory  is a useful primitive for many applications.  As an illustration,  we construct a quantum-enhanced protocol to measure  the total duration of  a  sequence of events.  The same protocol can be used  to establish a shared frequency standard  among the nodes of a network, and to  generate    quantum states with Heisenberg-limited sensitivity to time evolution and to  phase shifts.


\section{The size-accuracy tradeoff}    
Suppose that  a  parameter $T$ is encoded in the state of a  quantum system, say $\rho_T$.    The system can be either a quantum clock, where $T$ is the time elapsed since the beginning of the evolution, or a quantum memory, where the dependence of $\rho_T$ on $T$ can be completely arbitrary.   In general, the parameter  $T$ does not have to be time: it can be phase, frequency, or any other real parameter.  

When needed, one can extract information about the parameter $T$ by measuring the system. 
The question is how accurate the measurement can be. The  inaccuracy of a given measurement can be quantified by the size of the smallest interval, centred around the true value,  in which the measurement outcome  falls with a prescribed probability---for example, $P= 99\%$. Explicitly, the inaccuracy has the expression
\begin{align}\label{A1}
\delta (P,T):= \inf    \left\{ { \delta }   \, \Big|  ~  P(\delta,T)      \ge  P   \right\}   \, ,
\end{align}
where   $P(\delta, T)$ is the probability that  the measurement outcome $\widehat T$ belongs to an interval of size $\delta$ centred around the true value $T$.

Note that  the inaccuracy can generally depend on the true value $T$, which is unknown to the experimenter. 
The dependence can be removed by fixing  a fiducial interval $[T_{\min},  T_{\max}]$.  
For example, the fiducial interval could be the inversion region where the parameter $T$ is in one-to-one correspondence with the state of the system \cite{kohlhaas2015phase}.    We denote by $\delta (P)$ the worst-case value of the inaccuracy within the fiducial interval.     In the Bayesian approach,  $\delta (P)$ provides  a lower bound on the probability that the true value falls within  an interval of size  $\delta (P)$ around the measured value  $\widehat T$: such probability is guaranteed to be at least $P$ for every prior distribution on $T$ and for all values of $\widehat T$  except at most a zero-probability set.    Other  properties of the inaccuracy, used later in the paper, are presented in Methods.

We now derive a   fundamental  lower bound on  the inaccuracy, expressed in terms of the size of the quantum system used to encode  the parameter  $T$.   Let us denote by  $D$ the dimension of the smallest  subspace containing the eigenvectors of the states $\{\rho_T  ~|~  T_{\min}\le T\le   T_{\max}\}$.  Physically, $D$ can be regarded as the {\em effective dimension} of the system used to encode the parameter $T$.  In terms of the effective dimension, the inaccuracy satisfies the bound
\begin{align}\label{HL-memory}
 {\delta  (P)}  \,   \ge \frac {P \,  \Delta  T }{D+1}\, ,   \qquad \Delta T  :=  T_{\max} -  T_{\min}\, ,
\end{align}
valid for arbitrary encodings of the parameter $T$ and for arbitrary quantum measurements.    We call Eq. (\ref{HL-memory}) the {\em size-accuracy bound}.    
   
The  size-accuracy bound   follows from dividing the fiducial interval  $\Delta T$ into   $ N=  \lfloor \Delta T/\delta (P) \rfloor$ disjoint intervals of size $\delta (P)$.  One can then encode the midpoint value of the $i$-th interval into the state   $\rho_{T_i}$.  In this way, one obtains $N$ quantum states, which can be distinguished with probability of success at least $P$ (one has just to estimate $T$ and to declare the state $\rho_{T_i}$ if the estimate of $T$ falls in the $i$-th interval).       On the other hand,  the $N$ states are contained in an $D$-dimensional subspace,  and therefore the probability of success is upper bounded by $ D/N$ \cite{yuen-1975}, leading to the bound $P\le   D/N$ and, in turn, to  Eq.  (\ref{HL-memory}).

The size-accuracy bound  (\ref{HL-memory}) captures in a unified way  the Heisenberg scaling  of quantum clocks and the ultimate limits on the memory needed to store the parameter  $T$.    Let us first see how it implies the Heisenberg scaling of quantum clocks.  Consider  a clock made of $n$ identical non-interacting particles, each evolving with the same periodic evolution  $U_T   =   e^{-i  T  H/\hbar}$, where $H$ is the single-particle Hamiltonian.  If the particles are initialized in the state $|\Psi\>$, then  the quantum state at time $T$ is    $|\Psi_T\>  =    U_T^{\otimes n} \,  |\Psi\>$.    Now, in order for the time evolution to be periodic, the eigenvalues of $H$ must be  integer multiples of a given energy.      This implies that   the number of distinct eigenvalues of the $n$-particle   Hamiltonian grows linearly with $n$, and, therefore,  all the states  $|\Psi_T\>$ are contained in a subspace of  dimension proportional to $ n$. Hence, one obtains  the bound $  \delta (  P ) \ge {c}/ n$ for some suitable constant $c>0$. 
Note that this relation also holds for mixed states, because mixing can only increase the inaccuracy  (see Methods).  
Moreover, the relation    $  \delta (  P ) \ge {c}/ n$  holds for arbitrary measurements. 
   
   The bound $\delta (P)\ge c/n$ implies the familiar Heisenberg bound  on  the standard deviation of the best unbiased measurement.   The argument is simple:  by  Chebyshev's inequality,  the standard deviation  $\sigma$ satisfies the bound $\sigma\ge \sqrt{(1-P)/4} \, \delta(P) $, which combined with the bound   $\delta (P) \ge c/n$   implies the Heisenberg scaling of the standard deviation.    It is important to stress   that  our ``Heisenberg-like'' bound  $\delta (P)  \ge c/n$ holds even when the measurement in question is not unbiased. 

  The size-accuracy bound (\ref{HL-memory}) can also be applied  to memories.   Suppose that one wants to write down the parameter $T$ with accuracy $\delta (P)$   into a quantum memory  of  $q$ qubits.      
  Then, Eq.  (\ref{HL-memory}) implies that, no matter what encoding is used,     the number of memory qubits must be a least 
  \begin{align}\label{HLformemory}
q    \ge  \log \frac 1 {\delta (P)}    +  O(1) \, .
\end{align}   We call Eq. (\ref{HLformemory}) the \emph{quantum memory bound}.  In the following, we show that  the bound  is tight, meaning that there exist  quantum states and quantum measurements for which Eq. (\ref{HLformemory}) holds with the equality sign.  Moreover, we show that these states can be generated from an ensemble of identically prepared  quantum particles by applying a  compression protocol that  minimises the memory size while preserving the accuracy.

\section{Compressing time information: the noiseless scenario}   
Consider a quantum clock made of $n$ identical particles
oscillating between two energy levels. Restricting the attention to these levels, each particle can be modelled as a qubit.     In the absence of noise, the evolution of each qubit is governed by  the Hamiltonian $H=E_0 \,|0\>\<0|+E_1 \, |1\>\<1|$, where $E_0$ and $E_1$ are the energy levels and  $|0\>$ and  $|1\>$ are the corresponding eigenstates.   For each  individual qubit, the best clock state is the  uniform superposition  $|+\>   =  (|0\>  +  |1\>  )/\sqrt 2$.    Choosing units such that   $(E_1- E_0)/\hbar  = 1$, the  state at time $T$ is  $|\psi_T\>    =   (|0\>  +   e^{-iT}  \,  |1\>)/\sqrt 2$. 

It is well known that $n$ qubits in an entangled state can achieve the Heisenberg scaling $1/n$ in terms of standard deviation,  from which it is immediate that the same scaling can be achieved in terms of inaccuracy \footnote{This can be shown by applying Markov inequality to the squared deviation from the mean.}.   
 However, here we consider $n$ qubits in a product state---specifically, the product state $|\psi_T\>^{\otimes n}$ obtained by preparing each qubit in the optimal single-copy state.    The state $|\psi_T\>^{\otimes n}$   has the standard scaling $\delta (P)  \approx 1/\sqrt n$ with the clock size (see Methods). And yet, it  can be compressed to a state that has the optimal scaling  with the memory size.  

To understand how  the compression works, it is useful to expand the clock  state  as 
\begin{align}\label{pureclock}
 |\psi_T\>^{\otimes n}  =  \sum_{k=0}^n  \,      e^{-i  k  T}  \,     \sqrt {B_{k,n,1/2}}  \,   |n,k\>  \, ,
\end{align}
  where $B_{k,n,1/2} $ is the binomial distribution with probability $1/2$, and $|n,k\>$ is the  state obtained by symmetrizing  the state $|1\>^{\otimes k} \otimes |0\>^{\otimes n-k}$ over the $n$ qubits.   The key observation is that,  for large $n$,  the binomial distribution is concentrated in an interval of size $ O(\sqrt n)$ around the average value  $\<k\>  = \lfloor n/2  \rfloor$. 
  This means that  the state  $ |\psi_T\>^{\otimes n}$  can be compressed into a typical  subspace of dimension $O( \sqrt n)$  without introducing significant errors.  More precisely, the errors are determined by the tails of the binomial distribution, which fall off exponentially fast as  $n$ increases.  
   After the clock state has been projected in the  typical subspace, it  can be encoded into $ 1/2 \log n$ memory qubits at the leading order.   This encoding attains the bound (\ref{HLformemory}): indeed,  the inaccuracy scales as $\delta (P)  \approx  1/\sqrt n $  for every fixed $P$, and the number of memory qubits, equal to  $ 1/2 \log n$,   grows exactly  as $\log  [1/\delta (P)]$.

The original state $|\psi_T\>^{\otimes n}$ can be retrieved from the compressed state, up to an error that vanishes exponentially fast with $n$.   Thanks to the exponential decay of the error,  a good compression performance  can obtained already for  small clocks:   for example,   $n=16$ is already in the asymptotic regime for all practical purposes. A  compression from 16  clock qubits to 4 memory qubits can be done with a compression error of $5.5\times10^{-3}$, in terms of the trace distance, or $3.0\times 10^{-5}$, in terms of the infidelity.  Relatively high quality compression can be obtained also for smaller number of qubits: for example, four clock  qubits can be encoded into  two memory qubits with fidelity $87.9\%$.

 Summarizing, the state of $n$ identically prepared clock qubits can be compressed into $1/2\log n$ memory qubits without compromising the accuracy. 
  The key ingredient of the compression protocol  is the projection  of the state (\ref{pureclock}) into the typical subspace spanned by energy eigenstates with oscillation frequencies in an interval of size $\sqrt n$ around the mean value. 
   We call this technique {\em frequency projection}.      

  \section{Extension to mixed states and noisy evolution}  We have seen that the optimal tradeoff between inaccuracy and memory size can be achieved for pure states with unitary time evolution.   A similar result can be obtained also in the noisy case. 
  Let us consider first the case where noise affects the state preparation,   while the evolution itself is still unitary.  In this model, each clock  qubit starts off in the mixed state  $\rho_{0,p}= p  |\psi_0\>\<\psi_0| +  (1-p)  I/2$  and evolves to the state  $\rho_{T,p}= p  |\psi_T\>\<\psi_T| +  (1-p)  I/2$.  Physically, we can think of these mixed states as the result of dephasing noise on the pure states   $|\psi_T\>$.   
  
In general, the amount of dephasing may vary from one qubit to another.   However,  as long as the variations are   random and affect all qubits equally and independently, the  state of the clock can be described as $\rho_{T,p}^{\otimes n}$, for some effective $p$.      Even more generally, one could consider some types of correlated  noise, where the errors acting on different qubits are part of  an (ideally infinite) exchangeable sequence \cite{kallenberg2006probabilistic}.    Physically, this means that each qubit undergoes a random phase kick, possibly correlated with the phase kicks received by the others, but without any systematic bias that makes one qubit more prone to noise than the others.  The model of exchangeable dephasing noise includes the correlated errors due to an overall uncertainty on the initial time of the evolution.  In general,  de Finetti's theorem  \cite{kallenberg2006probabilistic} implies that   exchangeable dephasing errors lead to a  mixture of states of the form $\rho_{T+ T_0,p}^{\otimes n}$, where  $T_0 $ is a random shift of the time origin and  $p$ is a random single-qubit dephasing parameter.     Thanks to this fact, we can focus first on the compression of the clock states $\rho_{T,p}^{\otimes n}$, and then include the case of correlated noise by allowing $p$ to vary.

The clock state  $\rho_{T,p}^{\otimes n}$ can be decomposed as a mixture of states with  different values of the total spin.     The decomposition is implemented by the Schur transform \cite{bacon-chuang-2006-prl}, which transforms the original $n$ qubit system into a tripartite system,  consisting of a spin register, a rotation register, and a permutation register, as in the following equation
 \begin{align}\label{decomp}
{\sf Schur}  \, \Big (   \rho_{T,p}^{\otimes n}  \Big)  =\sum_{J=0}^{n/2}q_{J,p}  \,  \Big(|J\>\<J|\otimes\rho_{T,p,J}\otimes\omega_J\Big) \, ,
\end{align}  
where $J$ is the quantum number of the total spin, $q_{J,p}$ is a probability distribution,  $  \{|J\>\}_{J=0}^{n/2}$ is an orthonormal basis for the spin register, $\rho_{T,  p, J}$ is a state of the rotation register, and $\omega_J$  is a fixed state of the permutation register, independent of $T$ and $p$. 

Since the states of the spin register are orthogonal, the value of $J$ can be read out without disturbing the state.   The problem is then to store the state $\rho_{T,p,J}\otimes\omega_J$    in the minimum amount of memory. Note also that the permutation register   can be discarded, for it contains no information about $T$.   Hence, the problem is actually to store the state $\rho_{T,p,J}$.   This can be done    through the technique of frequency projection, which is realised here by projecting the  state into the subspace  spanned by eigenstates of the total Hamiltonian in an interval of size $\sqrt J$ around the mean.

It turns out that the error introduced by frequency projection is negligible for large  $J$.      Specifically, we showed  that the trace distance between the original state and the frequency-projected state  is upper bounded as  
\begin{align}\label{Jbound}
\epsilon_{{\rm proj},J}\le (3/2)J^{{-\frac18\ln\left(\frac{p}{1-p}\right)}}+O\left(J^{-\frac18\ln J}\right)
\end{align}  
(see Supplementary Note 1 for the details).   The error of frequency projection becomes significant when $J$ is small, but fortunately the probability that $J$ is small tends to zero exponentially fast as $n$ grows: indeed, the probability distribution $q_J$ is the product  of a polynomial in $J$ times  a Gaussian with variance of order $\sqrt n$  centred  around the value $J_0  =   p(n+1)/2$ \cite{yang-chiribella-2016-prl}.

Frequency projection squeezes the density matrix into a subspace of size $\sqrt J$, which  can then be encoded into   a quantum memory of approximately $1/2 \log J$ qubits.  Now, the question  is whether $1/2 \log J$ is the minimum number of qubits  compatible with the quantum memory  bound (\ref{HLformemory}).  
  Here we show that the answer is affirmative for all values of $J$ in the high probability region.  
   The argument  is based on  two observations:  First, the inaccuracy for the state $\rho_{T,p}^{\otimes n}$, minimised over all possible measurements, has the  scaling $\delta_{\min} (P)   =   f(P)/\sqrt n$,  where $f(P)$  is a suitable function (see  Methods).   Second,  the state   $\rho_{T,p}^{\otimes n}$  can be  converted into any of the typical states $\rho_{J,T}$  in an approximately reversible fashion, with an error that vanishes in the large $n$ limit  \cite{universal}.   Since the inaccuracy is a continuous function of the state (see Methods),    we obtain that the minimum inaccuracy for the typical state  $\rho_{T,J}$ is
   $\delta_{\min}^{(J)}  (P)    =    {f(P)}/{\sqrt n}$  
   at leading order in $n$.   Now, recall that the typical values  of $J$ are equal to $J_0  =   p  (n+1)/2$, up to a correction of size at most $\sqrt n$.    Hence, one has  
    \begin{align}
   \delta_{\min}^{(J)}  (P)    =    \frac{  \sqrt{p}  \,  f(P)}{\sqrt {2 J}}  
    \end{align}
    at leading order.       Hence, the quantum memory bound (\ref{HLformemory}) implies that  at least $1/2 \log J $ memory qubits are necessary at leading order. But this is exactly the number of memory qubits used by our compression protocol. This concludes the proof that the protocol is optimal in terms of memory size for every typical value of $J$.

Note that the compression protocol does not require any knowledge of the time parameter $T$, nor it requires knowledge of  the dephasing parameter  $p$.  Thanks to this feature,  the protocol applies even in the presence of  randomly fluctuating and/or correlated dephasing, as long as   dephasing errors on different qubits arise from an  exchangeable sequence of random variables.

The protocol can also be applied when the evolution  is noisy.   Dephasing during the time evolution  is described by the  master equation  \cite{mixed-clock1},
  \begin{align}\label{master}
 \frac{\d \rho_t}{\d t}=  \frac i {\hbar} [\sigma_z,\rho_t]+\frac{\gamma}2(\sigma_z \rho_t\sigma_z-\rho_t) \, ,
 \end{align}
where  $\sigma_z$ is the  Pauli matrix  $\sigma_z  =  |0\>\<0|  - |1\>\<1|$ and $\gamma\ge0$ is the decay rate, corresponding to the inverse of the relaxation time $T_2$  in NMR.   The state at time $T$  is  
\begin{align}\label{noisy}
\rho_{T,\gamma}   =    p_{T,\gamma}  \,  |\psi_T\>\<\psi_T | +  (1-  p_{T,\gamma}) \,   \frac I2  \, ,
\end{align}
where $p_{T,\gamma}=  (1-e^{-\gamma (T+  \tau_0)})/2$, and $\tau_0$ accounts for dephasing noise in the state preparation.      The only difference with the case  of unitary evolution is that now the amount of dephasing depends on $T$.  
  However, since our compression protocol does not require knowledge of the dephasing parameter $p$, all the results shown before are still valid.

\section{Applications}
\subsection{Measuring the duration of a sequence of events.}   An important  feature of the compression protocol is that it is approximately reversible, meaning that the original $n$-qubit state can be retrieved from the memory, up to a small error that vanishes in the large $n$ limit (see Supplementary Note 2).   Thanks to this feature, one can engineer a setup  that pauses the time evolution and resumes it on demand.    
  
The setup, illustrated in Figure \ref{fig:flowchart},  uses a quantum clock made of $n$ identical qubits. 
 At time $t_0$, each qubit is initialized in the state $\rho_{t_0,\gamma}$. The qubit evolves until time $t_0'  =   t_0 +  T_0$  under the noisy dynamics (\ref{master}).    The state of the $n$ clock qubits  is then stored in the quantum memory, where it remains  until time $t_1$.  For simplicity, we assume that  the memory is ideal, meaning that the state of the memory qubits does not change during the lag time between $t_0'$ and $t_1$. Physically, this means that the decoherence time of the memory is long compared to the lag time between one event and the next.  At time $t_1$, the state of the memory is transferred back to the clock, which resumes  its evolution until $t_1'  =  t_1  +  T_1$. The procedure is iterated for $k$ times,  so that in the end the state of the clock records the total duration  $  T  =   T_0  +  T_1  +  \dots  +  T_{k-1}$. 
 
\begin{figure}[t!]
\centering
\includegraphics[width=0.45\linewidth]{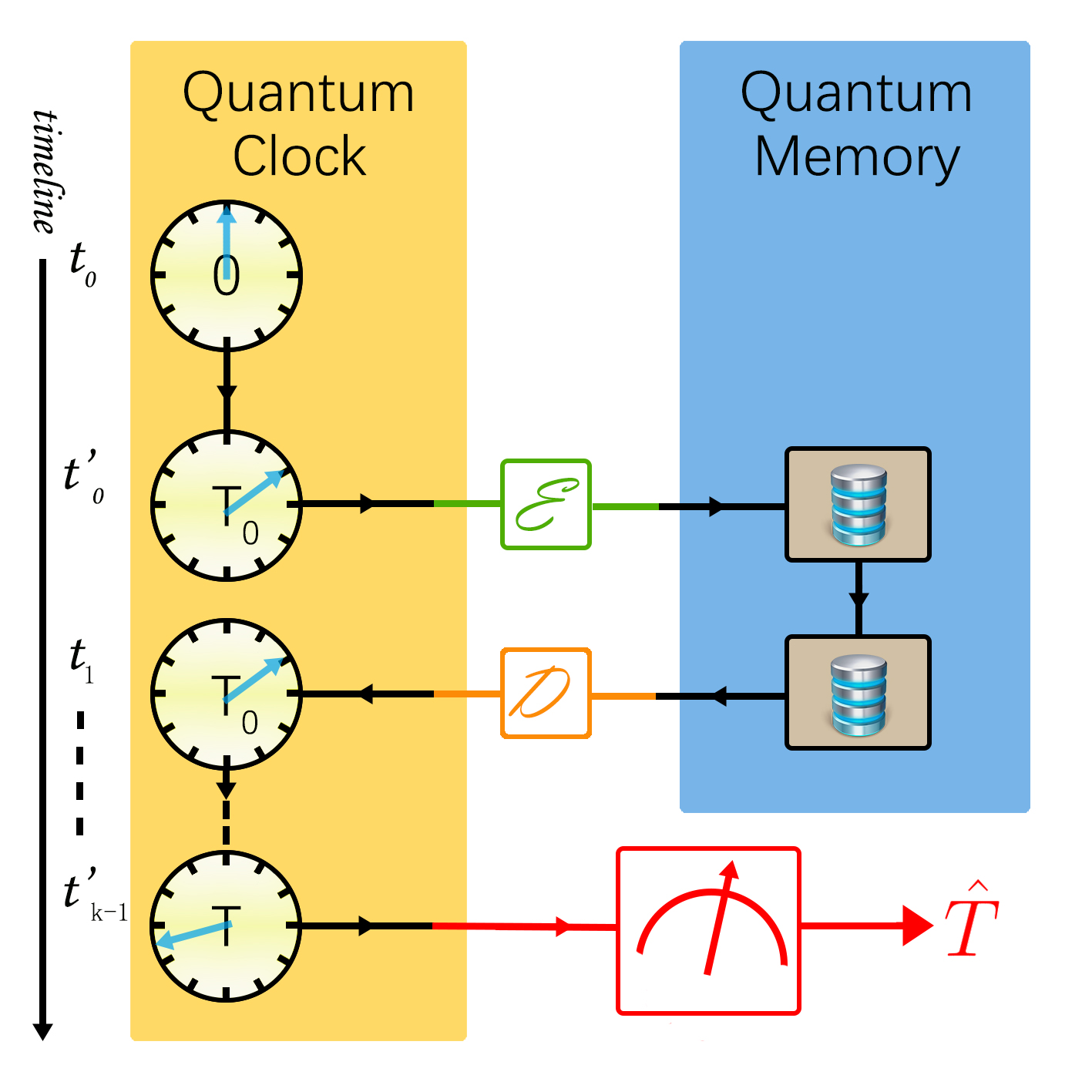}  
\caption{
{\bf  Coherent protocol measuring the total duration of $k$ events.}     
 The  clock starts its time evolution at time $t_0$ and continues evolving until time $t_0'$, when the first event is concluded. At this point, the time information is transferred to  the quantum memory, where it remains  until time $t_1$, when the  information is transferred back to the clock.   The procedure is repeated for $k$ times, so that the total duration of the $k$ events  is coherently recorded in the state of the memory.  Finally, the state of the memory is measured, yielding an estimate of the total duration.  } \label{fig:flowchart}
\end{figure}

Our coherent setup offers an advantage  over incoherent protocols where the duration of each time interval  $T_j$ is measured individually.   
In the noiseless scenario, the comparison is straightforward.  The probability distribution for the optimal  time measurement on the state $|\Psi_{T_j}\>^{\otimes n}$  is approximately Gaussian, and   the inaccuracy for the sum of $k$  Gaussian variables grows as $\sqrt k$. Instead, the inaccuracy of the coherent protocol is approximately constant in $k$, up to higher order terms arising from the compression error  (see Methods). Hence, performing a single measurement  reduces the inaccuracy  by a factor $\sqrt{k}$.   

The advantage of the coherent protocols persists even after taking into account the error of frequency projection, and even for relatively small $n$. As a simple example let us consider  the case of  $n=8$ and $P=0.9$. For a sequence of $k=4$ events, a coherent protocol using three qubits of memory has an inaccuracy $0.787$ times that of the incoherent protocol. 


The benefits of the coherent protocol  are not limited to the noiseless scenario. A performance comparison between coherent and incoherent protocols  is presented in Figure \ref{fig:comparison}, for the task of measuring the duration of a time interval $T$, divided into $k$ subintervals of equal length.   The figure shows the advantage of the coherent approach  for $\gamma  =  0.2$ and for every $k$ larger than $2$. 
\begin{figure}[h!]
\centering
\label{3D}\includegraphics[height=4.5cm]{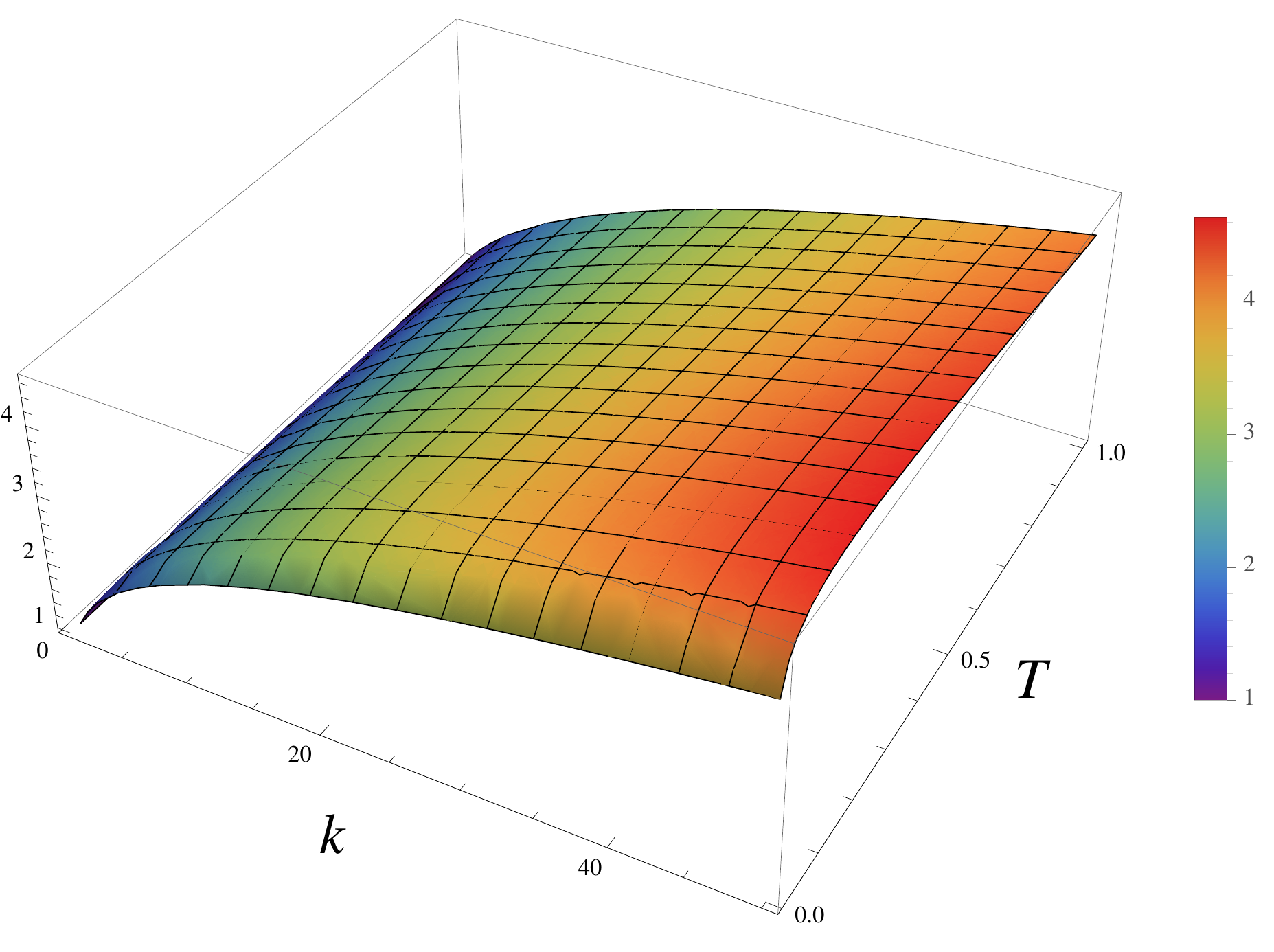} \hspace{0.5cm}  \label{MSE} \includegraphics[height=4.5cm]{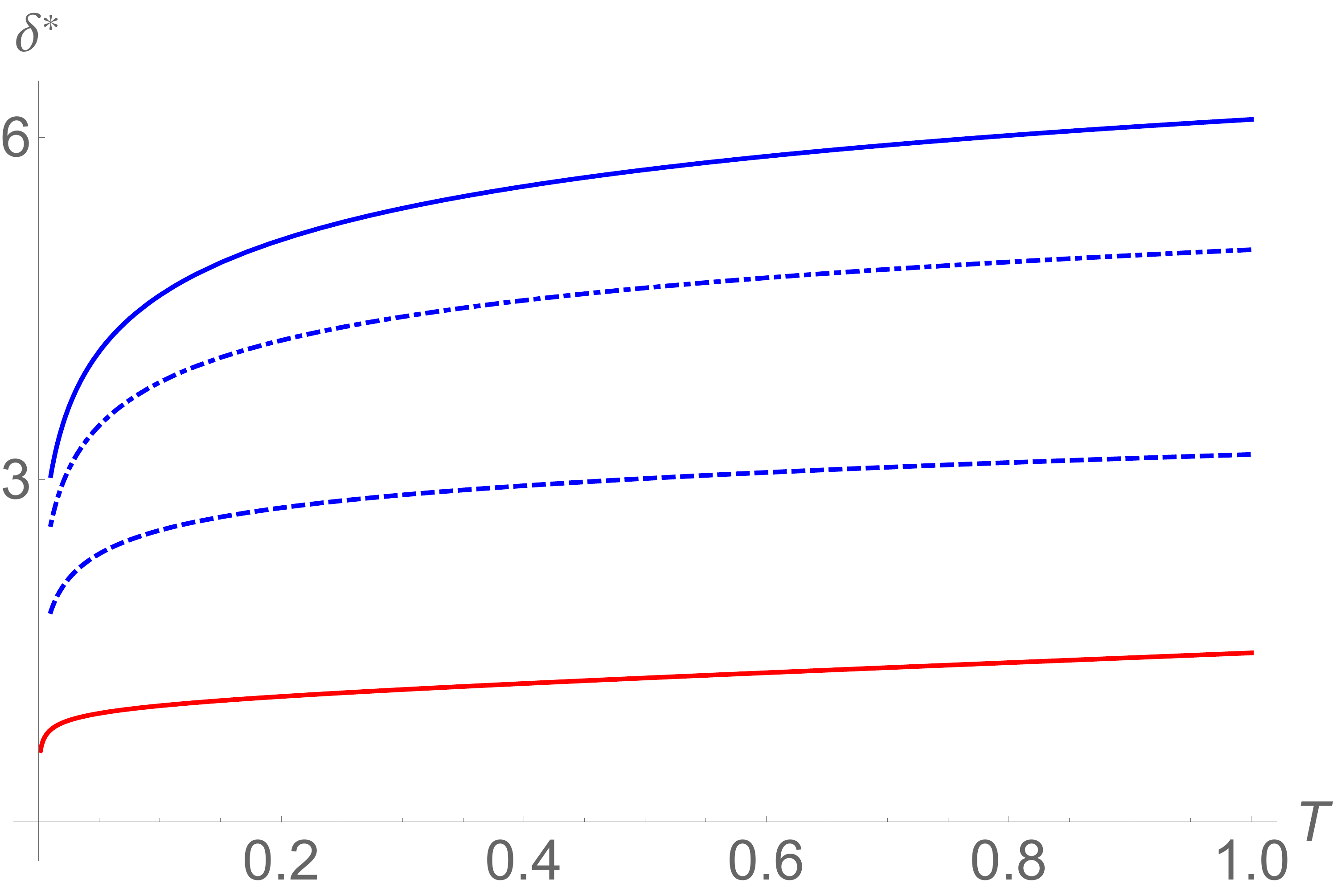}  
\caption{{\bf Comparison between coherent and incoherent protocols}.   {\em On the left:}    ratio between the inaccuracy of the  incoherent protocol  and the  inaccuracy of the coherent protocol.   The figure shows the ratio for clock qubits with decay rate $\gamma  =   0.2$.  The quantum advantage grows with the number of time intervals $k$, with an inaccuracy reduction of about 5 times for $k=50$.       {\em On the right:} Dependence of the rescaled  inaccuracy $\delta^*  = \sqrt n \,  \delta$ on   the total time $T$ for $\gamma  =  0.2$.  For the coherent protocol (red line) the inaccuracy is independent of $k$, while incoherent protocols  have inaccuracies increasing with $k$, illustrated by the blue lines    for $k  =  10, 30,$ and $50$. }\label{fig:comparison}
\end{figure}
  In Supplementary Note 3   we provide   a necessary and sufficient condition for the coherent protocol to have better performance than  the incoherent protocol.        The condition shows that the total duration is better computed coherently  whenever  the length of each subinterval $T/k$ is  small compared to the decoherence time $1/\gamma$.  Note that the advantage persists even when the total time $T$ is large, although the performance of both the coherent and the incoherent protocol worsen as  $T$ grows.

In the above discussion we assumed that the the memory is free from noise and that the compression protocol is implemented without errors.  Of course, realistic implementations will also involve errors.  One way to take  into account the noise in the memory  is to introduce an effective dephasing rate  $\gamma_j$, which models independent and symmetric errors occurring  in the lag time between $t_j'$ and $t_{j+1}$.    Overall, the result of this extra dephasing  is to reduce the size of the parameter region where the coherent storage of time information offers an advantage over the incoherent strategy.  
Now, let us consider the errors in the implementation of the compression protocol.    Thanks to the continuity of the inaccuracy (see Methods), the error of the circuit implementation can be analysed  independently of estimation inaccuracy.
Assuming an independent error model, the errors in the implementation of the encoding and decoding operations will introduce an error $\epsilon_{\rm circuit}$, which is bounded by the error probability of each elementary gate, denoted by  $\epsilon_1$, times the gate complexity of the whole circuit. The overhead of the gate complexity is the complexity of the Schur transform \cite{bacon-chuang-2006-prl}, which was recently reduced to $n^4\log n$ \cite{kirby2017practical}. For $k$ iterations, the overall error $\epsilon_{\rm circuit}$ scales as $k\epsilon_1 n^4\log n$, resulting in an additional term $\epsilon_{\rm circuit}/\sqrt{n}$ to the inaccuracy. Therefore, one can see that the inaccuracy will remain almost unaffected as long as the gate error of the compression circuit $\epsilon_1$ is small compared to $\left(kn^4\log n\right)^{-1}$. This is, of course, a challenging requirement, but it is important to note that the required gate error can be achieved  using fault tolerance,  without the need of implementing physical gates with an error vanishing with $n$.  In fact, the desired rate of $\epsilon_1$ can be achieved by using physical gates with error below a constant threshold value, by recursively increasing the number of layers of  error correction  \cite{aharonov1997fault,kitaev1997quantum,knill1998resilient}.

\subsection{Stabilizing quantum clocks in a network.}   Networks of clocks are important in many areas, such as GPS technology and distributed computing.  
Recently,  K\' om\'ar {\em et al}  proposed a quantum protocol,  allowing multiple nodes in a network to  jointly stabilize their clocks with higher accuracy \cite{network-nature,network-prl}.  
 The protocol involves a network  of $k$ nodes,  each node with a local oscillator used as a time-keeping device. The goal is to guarantee that all local oscillators have approximately the same frequency.     To this purpose, a central station distributes  a GHZ state $|{\tt GHZ}\>   =  \big (    |0\>^{\otimes k}  + |1\>^{\otimes k} \big)/\sqrt 2$ to the $k$ nodes.  The entanglement is then transferred to $k$ atomic clocks.    By interacting with the clock qubits, the $k$ nodes adjust the frequencies of their local oscillators, obtaining a shared time standard with accuracy  $1/(\sqrt n k)$, where $n$ is the number of repetitions of the whole procedure.   The key ingredient in this last step  is a protocol for estimating the sum of the frequencies of the local oscillators.  
 
Overall, the above protocol requires the communication of $  kn$  qubits.  Using our stopwatch protocol, we can reduce the amount of  quantum communication to $k  \log n  /2$  qubits at the leading order.    This can be done in two different ways.  The first way is to use the a sequential protocol, where the first node lets their local oscillator interact with the atomic clock for a fixed time $T_0$, and then encodes the state of the clock into a memory, which is sent to the second node.  The second node performs the same operations, and passes the memory to the third node, and so on until the $k$-th node.  In the end of the protocol, the memory will contain information about the total phase $\varphi_{\rm tot}  =  (\omega_1  +  \omega_2  +  \dots  +  \omega_k)  T_0$, where $\omega_j$ is the frequency of the $j$-th local oscillator. In this way, the  sum of the frequencies can be read out with an error  $1/\sqrt n$ independently of $k$, meaning that the average frequency has Heisenberg limited error of size $1/\left(\sqrt n  \, k\right)$.   

An alternative to the sequential protocol is to use a parallel protocol, where the central  station distributes entangled states to the $k$ nodes, as in Refs.    \cite{network-nature,network-prl}.
Using our frequency projection technique, it is possible to reduce the amount of quantum communication also in this case, from $kn$ qubits to $k  \log n  /2  $ qubits in total.  The idea is to compress the $n$ copies of the GHZ state into  a multipartite entangled state where each node has an exponentially smaller clock of  $1/2 \log n$ qubits.  Exploiting this fact,  one can obtain the same precision of  Refs. \cite{network-nature,network-prl} using an exponentially smaller amount of quantum communication between the nodes and the central station, as illustrated in Figure \ref{fig:network}.   
\begin{figure}[h!]
\centering
\label{3D}\includegraphics[height=5.5cm]{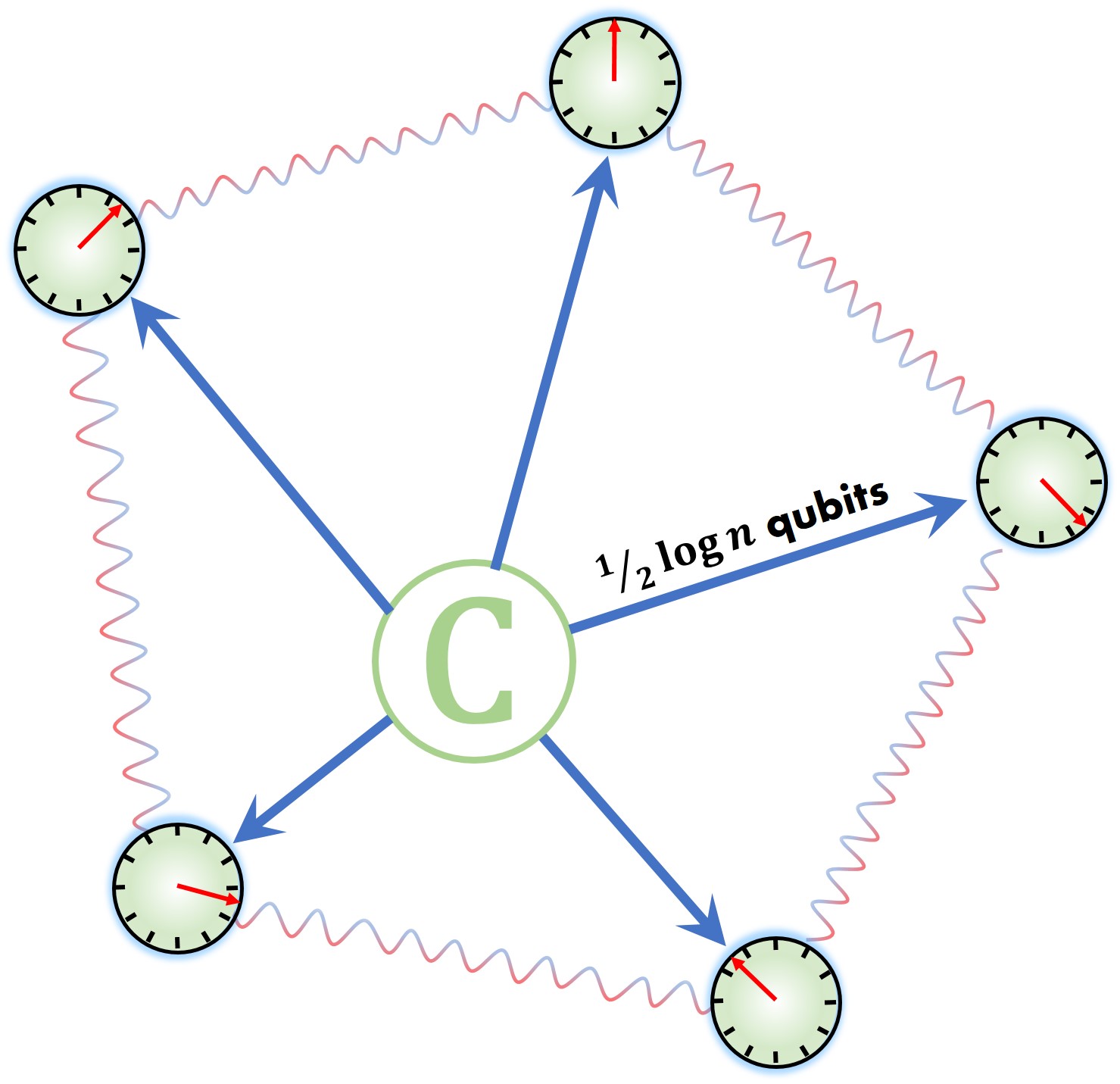}
\caption{{\bf Boosting the performance of quantum sensor networks.}   
A central station $\map C$ distributes quantum  information to  $k$ nodes, generating entanglement among $k$ local clocks.     Using frequency projection, the central station distributes a quantum state that guarantees the same precision of  $n$ GHZ states, while requiring an exponentially smaller amount of quantum communication of  $1/2\log n$ qubits per node. }\label{fig:network}
\end{figure}

\section{Conclusion}

The compression of clock states  is a versatile technique.  In the addition to advantages in measuring the total duration and in stabilizing quantum clocks in a network, it offers the opportunity  to transform product states of $n$ qubits into entangled states of $\sqrt n$ qubits, allowing one to reversibly switch from an encoding where the information can be accessed locally to a more compact encoding where the information is available globally.   
Quite interestingly, this approach works also for mixed states and in the presence of noise, thus defining a new set of mixed states achieving the Heisenberg limit.

In view of the applications,  it is natural to ask what ingredients would be needed to implement our compression protocol  experimentally. 
   The protocol requires a quantum computer capable of implementing the encoding and decoding operations.  The question is how large the computer should be and how many elementary operations it should perform.  In terms of size, we have seen that the large $n$ regime can be already probed for values  around $n=16$, a number that is  likely to be within reach in the near future. 
      In terms of complexity,  
       one can break down the compression protocol in two parts: the Schur transform and the frequency projection.    The Schur transform can be efficiently realised by a quantum circuit of at most polynomially many gates \cite{bacon-chuang-2006-prl}, scaling as $n^4\log n$ according to the most recent proposal \cite{kirby2017practical}. The circuit is simpler in the noiseless case \cite{buzek} and has been recently implemented in a prototype  photonic setup \cite{rozema}, which however is hard to scale to  larger number of qubits.  NMR and ion trap systems are another good candidate for prototype demonstrations of the Schur transform with small numbers of qubits, such as $n=10$.     The frequency projection   can be efficiently implemented with a technique introduced in Ref. \cite{buzek}, whereby the  spin eigenstate $|J,m\>$ is encoded into a $(2J+1)$-qubit state, with the $m$-th qubit is in the state $|1\>$ and all the other qubits are  in the state $|0\>$.   In this encoding, projecting on a restricted range of values of $m$ is the same as throwing away some of the qubits. 
The encoding operations and their inverses  can be implemented  using $O(J^2)$ elementary gates \cite{buzek}.   In summary, all the components of the quantum stopwatch can be implemented with a moderate amount of elementary gates. 
    The main challenge for the experimental realisation of our protocol is the required accuracy in the encoding and decoding operations, whose fault tolerant realisation requires additional layers of error corrections.  Our protocol provides an additional motivation to the realisation of fault-tolerant  quantum computers, showing  an example of application where the aid of a  quantum computer could significantly enhance the precision of time measurements. 

\section{Methods}  
\subsection{Properties of the inaccuracy}    The results of this paper take advantage of three basic properties of the inaccuracy, presented in the following:  \\
{\em (i) Continuity. }Suppose that the states $\rho_T$ and $\rho_T'$ are close in trace distance for every value of the parameter $T$. Operationally, this means that the  outcome probabilities for every measurement performed on $\rho_T$ are close to the outcome probabilities for the same measurement on $\rho_T'$.
  If the trace distance is smaller than $\epsilon$, then one has
  \begin{align}\label{bounds}
P(\delta, T )  - \epsilon  \le  P'(\delta,  T)  \le P(\delta  ,  T)  + \epsilon  ~ ,  
\end{align}
where $P(\delta,T)$   [respectively, $P'(\delta,T)$] is  the probability that the estimate falls within an interval of size $\delta$ around the true value  $T$.   In turn, Eq. (\ref{bounds}) implies the following bound in the inaccuracy  
\begin{align}
\delta  (P-\epsilon, T )    \le  \delta'(P,  T)  \le  \delta (  P+  \epsilon  ,  T)   ~ ,  
\end{align}
where $\delta(P,T)$ and $\delta' (P,T)$ are the inaccuracies for the states $\rho_T$ and $\rho_T'$, respectively.   When the probability distribution is sufficiently regular, these bounds guarantee that the inaccuracy is a continuous function of the state.  For example, we will see that the accuracy for an $n$-qubit quantum clock in the state $\rho_{T,\gamma}^{\otimes n}$ is equal to $\delta (P,  T)  = f(P)/\sqrt n $ at the leading order, where $f(P)$ is an analytical function of $P$.    A state that is $\epsilon$-close to  $\rho_{T,\gamma}^{\otimes n}$ will also have accuracy  $ \delta' ( P)  =  f(P)/\sqrt n$, up to a correction of size   $\epsilon/\sqrt n$.  In the case of our compression protocol, the compression protocol vanishes with $n$, meaning that the correction does not affect the leading order.  

{\em (ii)   Data-processing inequality.}  
Suppose that the system $S$, used to encode the parameter $T$, is transformed into another system $S'$ by some physical process.  Let $\rho_T'$ be the state generated by the process acting on the state $\rho_T$.    Then,  every measurement $M'$ on the output system  $S'$ defines a measurement  $M$ on the input system $S$, obtained by first transforming $S$ into $S'$ and then performing the measurement $M'$.  By construction, the two measurements have the same statistics and, in particular, the same inaccuracy.   
  By minimising the inaccuracy over all measurements, one obtains the inequality  
\begin{align}
 \delta'_{\min}   (P) 
\ge \delta_{\min}    (P)     \,, 
\end{align}
expressing the fact that physical processes cannot reduce the minimum inaccuracy.  

{\em (iii)  Symmetry.} For quantum clocks,  the accuracy is maximised by covariant measurements \cite{holevo}, that is, measurements  described by  positive operator valued measures of the form $M_{\widehat T}  = e^{-i \widehat T  H}  M  e^{i \widehat TH} $.       
  For covariant measurements,   one can show  that mixing increases the inaccuracy: 
\begin{prop} For a convex mixture $\rho  =  \sum_i  p_i\, \rho_i$, one has the inequality $\delta_{\min}  ( P)  \ge  \min_i \delta^{(i)}_{\min}  (P)$, where $\delta_{\min} (P)$   [respectively, $\delta_{\min}^{(i)}  (P)$] is  the minimum inaccuracy for the state $\rho$ [respectively, $\rho_i$]. 
\end{prop}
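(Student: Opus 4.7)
The plan is to exploit the two features of covariant measurements that make the proof almost immediate: the outcome distribution for a covariant POVM is translationally invariant in $T$, and the statistics depend linearly on the state. By the symmetry property stated just above, the minimum inaccuracy for any $\rho_T$ generated by the covariant evolution $U_T = e^{-iTH}$ is achieved (or approached) by covariant POVMs of the form $M_{\widehat T} = e^{-i\widehat T H} M e^{i \widehat T H}$, so I may restrict to these throughout.

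First I would show that for any covariant POVM $M$ the conditional probability density of the estimate $\widehat T$ given the true value $T$ depends only on the difference $\widehat T - T$: using cyclicity of the trace,
\begin{align*}
p(\widehat T \mid T) = \Tr\!\left(M_{\widehat T}\, e^{-iTH}\rho\, e^{iTH}\right) = \Tr\!\left(M\, e^{-i(T-\widehat T)H}\rho\, e^{i(T-\widehat T)H}\right)=: q_{\rho}(\widehat T - T),
\end{align*}
so that the interval probability $P_{\rho}^M(\delta) = \int_{-\delta/2}^{\delta/2} q_\rho(x)\,\d x$ is independent of $T$. This also makes it manifestly \emph{linear} in $\rho$, since $q_\rho$ is linear in $\rho$.

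Next I would use this linearity to transfer the inequality from $\rho = \sum_i p_i \rho_i$ to one of its components. Fix $\epsilon > 0$ and pick a covariant POVM $M$ with $P_{\rho}^M\bigl(\delta_{\min}(P) + \epsilon\bigr) \ge P$, which exists by the definition of $\delta_{\min}(P)$ and the sufficiency of covariant measurements. Linearity gives
\begin{align*}
\sum_i p_i\, P_{\rho_i}^M\bigl(\delta_{\min}(P) + \epsilon\bigr) \;\ge\; P,
\end{align*}
so by a simple pigeonhole argument there must exist at least one index $i_0$ with $P_{\rho_{i_0}}^M\bigl(\delta_{\min}(P)+\epsilon\bigr) \ge P$. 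By the definition of $\delta_{\min}^{(i_0)}(P)$, this immediately yields $\delta_{\min}^{(i_0)}(P) \le \delta_{\min}(P) + \epsilon$, hence $\min_i \delta_{\min}^{(i)}(P) \le \delta_{\min}(P)+\epsilon$. Letting $\epsilon \to 0$ concludes the proof.

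The only delicate point I anticipate is justifying that the infimum in the definition of $\delta_{\min}(P)$ may be approximated by a single covariant POVM; this is handled by the usual $\epsilon$-slackening above, together with the symmetry property cited in the text, which guarantees that covariant POVMs are optimal among all POVMs for covariantly-generated families. No other step requires anything beyond linearity of the trace and a pigeonhole argument, so I do not expect a genuine obstacle.
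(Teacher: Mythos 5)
Your proof is correct and follows essentially the same route as the paper's: both restrict to covariant measurements (so the interval probability is $T$-independent), use linearity of the interval probability in the state $\rho=\sum_i p_i\rho_i$, and extract a component that performs at least as well as the mixture. The paper phrases it contrapositively, evaluating the convex combination at $\delta=\min_i \delta^{(i)}_{\min}(P)$ to show the mixture cannot do better, while you use an $\epsilon$-slackened near-optimal covariant POVM for $\rho$ plus a pigeonhole step, but the content is identical.
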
  
The argument is simple.  For every covariant measurement, the probability to find the estimate in an interval of size $\delta$ around the true value $T$ is independent of  $T$ and will be denoted simply  as $P(\delta)$.     For measurements on the state $\rho  =  \sum_i  p_i\, \rho_i$, the probability    is the convex combination  $P( \delta)  =  \sum_i  p_i\,  P_i(\delta)$, where $P_i (\delta)$ is the corresponding probability for the state $\rho_i$.    Setting $\delta  =  \min_i \delta_{\min}^{(i)}  (P)$,  one has the inequality 
$P(\delta)  \le \sum_i   p_i   P    =  P$, with equality if and only if all the inaccuracies $\delta^{(i)}_{\min} (P)$ are equal.    
Hence, the inaccuracy for the mixture  $\rho$ cannot be smaller  than $\delta$.

\subsection{Accuracy of time measurements on identically prepared clock states}   We now construct a measurement strategy that estimates the parameter $T$ from  the state $\rho_{T,p}^{\otimes n}$ with inaccuracy $\delta (P) =  O( 1/\sqrt n)$.     In this strategy,   each clock qubit is measured independently, projecting  the $j$-th   qubit  on the eigenstates   of the observable $O_j   =    \cos \tau_j  \,  \sigma_x  +   \sin  \tau_j  \,  \sigma_y$, where  $\tau_j$ is an angle, chosen uniformly at random between 0 and $2\pi$.       The measurement has two possible outcomes, $+1$ and $-1$.  If the outcome of the measurement  is $+1$,  one records the value  $\tau_j$,  if the outcome is $-1$, one records the value   $\tau_j+\pi$.     Mathematically,  the measurement strategy is described by the positive operator valued measure     $\{M_\tau\}_{ \tau \in [0,2\pi) }$  with 
\begin{align}\label{cov}
M_{\tau} =
\frac{1}{2\pi}
\left(
\begin{array}{cc}
1 & e^{-i\tau}\\
e^{i\tau} & 1
\end{array}
\right).
\end{align}
The probability density  that the measurement  yields the outcome $\tau$ when the input state is $\rho_{T,p}$ is
$P(\tau|  T, p) =
\Tr \left[ M_{\tau} \rho_{T,p}\right]$. Explicit calculation shows that the classical Fisher information of this probability distribution is  
  \begin{align}\label{floc} 
  F_{\rm loc}   &=     1-2\sqrt{p(1-p)}.
 \end{align}       
Now, since $n$ qubits are measured independently, one can collect the results of the measurements and use classical statistics to generate an estimate of  $T$.     
  Using the maximum likelihood estimator, one obtains an estimate that is approximately Gaussian-distributed  with average $T$ and standard deviation  $\sigma  =  1/\sqrt {  n  F_{\rm loc}}$.    The  probability that the estimate $\widehat{T}$ deviates from the true value by less than $\delta/2$ is \cite{van-book}
\begin{align}\label{idealerror}
\mathsf{Prob} \left(|\widehat{T}-T|\le \frac{\delta}2 \right)=&
\erf\left(   \delta  \,    \sqrt{\frac{nF_{\rm loc}  }{8  }}  \right) +   O\left(\frac 1{\sqrt n} \right) 
 \, ,  
\end{align}
where  $\erf(x)$
  is the error function. Hence,  the inaccuracy can be expressed as 
\begin{align}
\delta_{\rm loc} (P)=\sqrt{\frac 8{nF_{\rm loc}}}  \, \erf^{-1}(P)  +   O\left(\frac 1{n} \right)   \, .
\end{align}

The same argument applies to the states $\rho_{T,\gamma}$, generated by the noisy time evolution.   The only difference is that, in this case, the classical Fisher information is 
\begin{align}\label{floc} 
  F_{\rm loc}   &=     1-\gamma^2-\sqrt{1-e^{-2\gamma T}} +\frac{\gamma^2}{\sqrt{1-e^{-2\gamma T}}}  ,
 \end{align}          
when the decay rate $\gamma$ is known, and 
\begin{align}\label{localMSE}
F_{\rm loc}&={1-\sqrt{1-e^{-2\gamma T}}} 
\end{align}
when   $\gamma$  is unknown (see Supplementary Note 4 for the derivation).  

\vskip6pt

\noindent{\bf Acknowledgement.} We thank  Lorenzo Maccone for useful comments and Xinhui Yang for drawing the figures. This work is supported by the Hong Kong Research Grant Council through Grant No. 17326616 and 17300317, by National Science
Foundation of China through Grant No. 11675136, by the HKU Seed Funding for Basic Research,  the John Templeton Foundation, and by the Canadian Institute for Advanced Research (CIFAR).   YY is
supported by a Microsoft Research Asia Fellowship and a Hong Kong and China Gas Scholarship.
MH was supported in part by a MEXT Grant-in-Aid for Scientific Research (B) No. 16KT0017, a MEXT Grant-in-Aid for Scientific Research (A) No. 23246071,
the Okawa Research Grant, and Kayamori Foundation of Informational Science Advancement. Centre for Quantum Technologies is a Research Centre of Excellence funded by the Ministry of Education and the National Research Foundation of Singapore. 

\medskip



\bibliographystyle{RS}

\bibliography{ref} 
\appendix

\subsection{Supplementary Note 1: error bound for frequency projection [Eq. (6) of the main text].}

Here we prove Eq. (6) of the main text, regarding the error of frequency projection of the compressor. We prove the result in a general setting, where the states to be compressed are of the form 
\begin{align}\label{clock-define}
\rho_{\phi,p}&:=p\, |\phi\>\<\phi|+(1-p)\, |\phi_{\perp}\>\<\phi_{\perp}| \, ,
\end{align}
with $p\in(1/2,1]$, $|\phi\>=\sqrt{s}|0\>+\sqrt{1-s}e^{-i\phi}|1\>$, and $|\phi_{\perp}\>:=\sqrt{1-s}|0\>-\sqrt{s}e^{-i\phi}|1\>$ for some fixed $s\in(0,1)$. 
Note that the clock states    $\rho_{t,\gamma}$ considered in the main text are  a special case of states of the form (\ref{clock-define}), with $p=(1+e^{-\gamma t})/2$ and $\phi=t$.

To begin with, we recall a few basic facts from the main text. First, the $n$-fold clock state $\rho_{\phi,p}^{\otimes n}$ can be decomposed as 
\begin{align}\label{decomp}
\rho_{\phi,p}^{\otimes n}\simeq\sum_{J=0}^{n/2}q_J\left(|J\>\<J|\otimes\rho_{\phi,p,J}\otimes\frac{I_{m_J}}{m_J}\right) \, , 
\end{align}
where $\simeq$ denotes the unitary equivalence implemented by the Schur transform, $J$ is the quantum number of the total spin, $q_{J,t,\gamma}$ is a probability distribution,  $|J\>$ is the state of the index register,  $\rho_{t,\gamma,J}$ is the state of the representation register,  and  $I_{m_J}/m_J$  is the maximally mixed state in a suitable subspace of the multiplicity register    \cite{fulton-harris,book-hayashi}.  
The state $\rho_{\phi,p,J}$ can be expressed in the form 
\begin{align}
\rho_{\phi,p,J}=U_\phi^{\otimes n}\left(\rho_{p,J}\right)U_\phi^{\dag \, \otimes n}\qquad U_\phi=|0\>\<0|+e^{i\phi}|1\>\<1|,
\end{align}
where the fixed state  $\rho_{p,J}$ has the form
\begin{align}
&\rho_{p,J}:=(N_{J})^{-1}\sum_{m=-J}^{J}p^{J+m}(1-p)^{J-m}|J,m\>_s\<J,m|_s\\
&N_J:=\sum_{k=-J}^{J}p^{J+m}(1-p)^{J-m}
\end{align}
where $|J,m\>_s$ is the orthonormal basis defined as
\begin{align}\label{sym}
|J,m\>_s:=\frac{\sum_{\pi\in\grp{S}_{2J}}V_\pi |\phi_0\>^{\otimes (J+m)}|\phi_{0,\perp}\>^{\otimes (J-m)}}{\sqrt{(2J)!(J+m)!(J-m)!}}
\end{align}
with $|\phi_0\>=\sqrt{s}|0\>+\sqrt{1-s}|1\>$, $|\phi_{0,\perp}\>=\sqrt{1-s}|0\>-\sqrt{s}|1\>$, $\grp{S}_{2J}$ being the $(2J)$-symmetric group and $V_\pi$ being the unitary implementing the permutation $\pi$. 

The frequency projection channel $\map{P}_{{\rm proj}, J}$ is defined as
\begin{align}
&\map{P}_{{\rm proj},J}(\rho):=P_{{\rm proj},J}\,\rho\,P_{{\rm proj},J}+\left(1-\Tr[\rho\,P_{{\rm proj},J}]\right)\rho_0\nonumber \\
&P_{{\rm proj},J}:=\sum_{|m-(2s-1)J|\le \frac{\sqrt{J}\log J}2}|J,m\>\<J,m|
\label{define-setC},
\end{align}
where $\rho_0$ is a fixed state of the representation register.
What we need to prove is exactly the following theorem.
\begin{theo}\label{lemma-proj}
For large $J$, the frequency projection error $\epsilon_{{\rm proj},J}:=\frac12\left\|\map{P}_{{\rm proj},J}(\rho_{\phi,p,J})-\rho_{\phi,p,J}\right\|_1$ is upper bounded as
\begin{align}
\epsilon_{{\rm proj},J}\le (3/2)J^{{-\frac18\ln\left(\frac{p}{1-p}\right)}}+O\left(J^{-\frac18\ln J}\right)
\end{align}
 for every $t$.
\end{theo}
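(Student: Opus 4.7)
The plan is to reduce $\epsilon_{{\rm proj},J}$ to a classical tail probability and then bound that tail by splitting the frequency index $m$ into two regimes.

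First, because $\map{P}_{{\rm proj},J}$ is a ``gentle measurement followed by replacement'' channel, the standard gentle measurement lemma gives $\epsilon_{{\rm proj},J}\le \sqrt{\eta}+\eta$ with $\eta:=1-\Tr[P_{{\rm proj},J}\,\rho_{\phi,p,J}]$. Since $P_{{\rm proj},J}$ is diagonal in the $J_z$-eigenbasis $\{|J,m\>\}$ and $U_\phi$ acts on the representation register as $e^{-i\phi J_z}$, the two commute, so $\eta=1-\Tr[P_{{\rm proj},J}\,\rho_{p,J}]$ is in fact independent of $\phi$. It therefore suffices to prove $\eta=O(J^{-\frac{1}{4}\ln(p/(1-p))})+O(J^{-\frac{1}{4}\ln J})$, so that $\sqrt{\eta}$ reproduces the bound of the theorem.

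Expanding the rotated Dicke basis in the standard one via $|J,m\>_s=\sum_{m'} d^J_{m'm}(\theta)|J,m'\>$ with $\cos\theta=2s-1$, and using the diagonal form $\rho_{p,J}=\sum_m q(m)\,|J,m\>_s\<J,m|_s$ with $q(m)=r^{J+m}/N_J$ and $r=p/(1-p)>1$, the leakage becomes the classical expectation
\begin{align*}
\eta=\sum_m q(m)\,\Pr_{m'\sim|d^J_{m'm}(\theta)|^2}\!\bigl[\,|m'-(2s-1)J|>\tfrac{\sqrt{J}\log J}{2}\bigr].
\end{align*}
I would bound this via a cutoff $\Delta$ of order $(\ln J)/\ln r$. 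For $m<J-\Delta$, the truncated-geometric sum gives $\sum_{m<J-\Delta}q(m)\le r^{-\Delta}/(1-r^{-(2J+1)})$, and choosing $\Delta$ so that $r^{-\Delta}\le J^{-\frac{1}{4}\ln r}$ yields the first, dominant term. For $m\ge J-\Delta$ the shift $(2s-1)(J-m)\le(2s-1)\Delta=O(\ln J)$ of the conditional mean is negligible compared with the window half-width $\sqrt{J}\log J/2$; and the conditional law $|d^J_{m'm}(\theta)|^2$ is sub-Gaussian with variance $O(J)$, since at $m=J$ the state $|J,J\>_s=|\phi_0\>^{\otimes 2J}$ is a spin coherent state whose $J_z$-distribution is exactly binomial with $2J$ trials and success probability $s$, to which Hoeffding's inequality directly applies and yields $\Pr[|m'-(2s-1)J|>u]\le 2e^{-u^2/J}=2J^{-(\ln J)/4}$ at $u=\sqrt{J}\log J/2$. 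Summing the two contributions gives $\eta=O(J^{-(\ln r)/4})+O(J^{-(\ln J)/4})$, and the claim follows upon taking $\sqrt{\,\cdot\,}$ and absorbing $O(1)$ prefactors into the constant $3/2$.

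The main technical obstacle is the uniform sub-Gaussian tail bound for $|d^J_{m'm}(\theta)|^2$ across the range $m\in[J-\Delta,\,J]$, together with the careful balancing of the geometric and Gaussian contributions needed to land on exactly $1/8$ in both exponents. At $m=J$ this is textbook Hoeffding for a binomial, but for $m<J$ the state $|J,m\>_s$ is no longer a product state in the $\{|\phi_0\>,|\phi_{0,\perp}\>\}$ basis and the variance of $J_z$ grows linearly with $J-m$, so naive Chebyshev arguments give only polynomial tails. Controlling the distribution uniformly therefore requires either explicit Jacobi-polynomial asymptotics for the Wigner small-$d$ function, a Holstein--Primakoff bosonization valid in the regime $J-m\ll J$, or a direct combinatorial comparison of $|J,J-k\>_s$ with the coherent state $|J,J\>_s$ that tracks only $k=O(\ln J)$ extra excitations. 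Once this uniform estimate is in hand, combining it with the elementary geometric-tail bound and the gentle measurement reduction completes the proof.
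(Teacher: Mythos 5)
Your overall skeleton matches the paper's: gentle measurement to pass from trace distance to the leakage probability $\eta = 1-\Tr[P_{\rm proj,J}\,\rho_{p,J}]$, use of the commutation $[U_\phi,P_{{\rm proj},J}]=0$ to drop $\phi$, a cutoff $a\sim (\ln J)/4$ splitting the $m$-sum into a near-highest-weight block and a geometric tail, and Hoeffding on the first block combined with a geometric bound on the second. Where you diverge is in the treatment of the first block, and that is exactly where your argument is incomplete. You reduce the near-highest-weight contribution to a uniform sub-Gaussian tail bound on the Wigner small-$d$ distribution $|d^J_{m'm}(\theta)|^2$ for all $m\in[J-\Delta,J]$, and you explicitly flag this as ``the main technical obstacle'' to be addressed by Jacobi-polynomial asymptotics, Holstein--Primakoff, or a combinatorial comparison. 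As stated, this is a genuine gap: you have identified the difficulty but not closed it, and the closing is not routine. The concern you raise (that $|J,m\>_s$ for $m<J$ is no longer a product state in the rotated basis and that Chebyshev only gives polynomial tails) is precisely why a crude second-moment bound fails.

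The paper circumvents this entirely through its Lemma~\ref{lemma-prop}, which bounds the individual overlaps $|\<J,m|_s\,|J,k\>|$ by an expression that \emph{factorizes} into a $k$-only part and an $m$-only part, e.g.\ $\sqrt{\binom{2J}{J+k}\binom{2J}{J-m}\,s^{2J+k-m}(1-s)^{m-k}}$ for $s\ge 1/2$. Squaring and summing over $k$ outside the window then reduces to a single binomial tail $\sum_{|k-(2s-1)J|>\sqrt{J}\log J/2}\binom{2J}{k}s^k(1-s)^{2J-k}$ that is \emph{independent of $m$} and is killed by Hoeffding with a super-polynomial factor $e^{-\ln^2 J/4}$. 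The residual $m$-dependent factor $(s/(1-s))^{J-m}\binom{2J}{J-m}$ is at worst $(s/(1-s))^a\binom{2J}{a}\sim (2J)^a/a!$ over the block $m\in[J-a,J]$, which is only quasi-polynomial in $J$ and so is dominated by the Hoeffding decay. No uniform sub-Gaussian estimate for $|d^J_{m'm}(\theta)|^2$ is ever needed. In short: the paper trades a distributional concentration claim you could not prove for a pointwise overlap inequality it \emph{can} prove (by symmetrizing and bounding an alternating sum by its non-alternating counterpart), at the mild cost of the extra $(2J)^a/a!$ factor, which the cutoff choice $a=\lfloor(\ln J)/4\rfloor$ keeps under control. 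If you want to salvage your route, the third option you mention (direct combinatorial comparison of $|J,J-k\>_s$ to the coherent state) is exactly what Lemma~\ref{lemma-prop} accomplishes; you would need to actually carry out that comparison.
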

\medskip
The property below is useful in our proof of the error bound.
\begin{lem}\label{lemma-prop}
The basis $\{|J,m\>_s\}$ defined by Eq. (\ref{sym}) satisfies the property
\begin{align}\label{inner}
|\<J,m|_s|J,k\>|\le\left\{\begin{matrix}\sqrt{{2J\choose J+k}{2J\choose J-m}s^{2J+k-m}(1-s)^{m-k}},&\qquad s\ge 1/2\\
\\
\sqrt{{2J\choose J+k}{2J\choose J-m}s^{m+k}(1-s)^{2J-m-k}},&\qquad s< 1/2.\end{matrix}\right.
\end{align}
where $$|J,k\>:=\frac{\sum_{\pi\in\grp{S}_{2J}}V_\pi |0\>^{\otimes (J+k)}|1\>^{\otimes (J-k)}}{\sqrt{(2J)!(J+k)!(J-k)!}}$$ is the symmetric basis.
\end{lem}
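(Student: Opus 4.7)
The plan is to reduce the overlap $\langle J,k|J,m\rangle_s$ to a coefficient-extraction problem in a single bivariate polynomial, apply the triangle inequality, and then bound the resulting positive sum via the support of a hypergeometric random variable.

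First identify the symmetric subspace of $(\mathbb{C}^2)^{\otimes 2J}$ with homogeneous polynomials of degree $2J$ in formal variables $x_0,x_1$, with the inner product making $\{x_0^{J+k}x_1^{J-k}/\sqrt{(J+k)!(J-k)!}\}$ orthonormal. Under this identification, the Dicke state $|J,k\rangle$ corresponds to $x_0^{J+k}x_1^{J-k}/\sqrt{(J+k)!(J-k)!}$, while the symmetrized state $|J,m\rangle_s$, being the symmetric projection of $|\phi_0\rangle^{\otimes(J+m)}|\phi_{0,\perp}\rangle^{\otimes(J-m)}$ at $\phi=0$, corresponds to $(cx_0+dx_1)^{J+m}(dx_0-cx_1)^{J-m}/\sqrt{(J+m)!(J-m)!}$, with $c:=\sqrt{s}$ and $d:=\sqrt{1-s}$. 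Hence
\begin{align*}
\langle J,k|J,m\rangle_s=\sqrt{\frac{(J+k)!(J-k)!}{(J+m)!(J-m)!}}\,\,[x_0^{J+k}x_1^{J-k}]\,(cx_0+dx_1)^{J+m}(dx_0-cx_1)^{J-m}.
\end{align*}

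Expanding by the binomial theorem and isolating the $x_0^{J+k}x_1^{J-k}$ coefficient gives an alternating sum $\sum_i(-1)^{i-m-k}\binom{J+m}{i}\binom{J-m}{J+k-i}c^{2i-m-k}d^{2J+m+k-2i}$. Applying the triangle inequality discards the signs. Factoring out $c^{-(m+k)}d^{2J+m+k}$ and setting $x:=(c/d)^2=s/(1-s)$, the remaining positive sum becomes $\binom{2J}{J+k}\,\mathbb{E}[x^I]$ by Vandermonde's identity, where $I$ is the hypergeometric random variable with $\mathbb{P}[I=i]=\binom{J+m}{i}\binom{J-m}{J+k-i}/\binom{2J}{J+k}$, supported on $[\max(0,m+k),\,\min(J+m,J+k)]$.

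Since $x\mapsto x^i$ is monotonic, one obtains $\mathbb{E}[x^I]\le x^{J+k}$ when $x\ge 1$ (i.e.\ $s\ge 1/2$), and $\mathbb{E}[x^I]\le x^{\max(0,m+k)}$ when $x\le 1$. Substituting these bounds, the leftover powers of $c,d$ collapse into $c^{2J+k-m}d^{m-k}$ and $c^{m+k}d^{2J-m-k}$, respectively. Absorbing the prefactor via the identity $\binom{2J}{J+k}\sqrt{(J+k)!(J-k)!/((J+m)!(J-m)!)}=\sqrt{\binom{2J}{J+k}\binom{2J}{J-m}}$ (using $\binom{2J}{J+m}=\binom{2J}{J-m}$) produces both cases of Eq.~(9) once $c^2,d^2$ are rewritten as $s,1-s$ inside the square root.

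The main obstacle is the monotonicity step: the triangle inequality alone would yield a loose bound, and the sharp exponents $2J+k-m$ and $m+k$ arise precisely from the extremal support points of the hypergeometric distribution controlling the binomial convolution. Everything else is routine manipulation in the polynomial representation.
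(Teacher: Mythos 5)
Your proof is correct and follows essentially the same route as the paper's: reduce the overlap to a single alternating sum of products of binomials and powers of $s,1-s$, drop the signs by the triangle inequality, bound the remaining ratio $(s/(1-s))^{\,\cdot}$ by its extreme value on the index range, and close with Vandermonde. The only cosmetic difference is that you package the combinatorics via the homogeneous-polynomial model of the symmetric subspace and index the sum so that Vandermonde yields $\binom{2J}{J+k}$ directly, whereas the paper manipulates the permutation sum head-on and its indexing produces $\binom{2J}{J-m}$; the two bookkeepings are related by the change of index $i=l+m+k$ and lead to the identical final bound.
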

\noindent{\it Proof.} Exploiting the symmetry of both bases, we have the following chain of inequalities.
\begin{align}
|\<J,m|_s|J,k\>|&=\frac{1}{(2J)!}\left|\sum_{\pi,\pi'\in\grp{S}_{2J}}\frac{\<\phi_0|^{\otimes(J+m)}\<\phi_{0,\perp}|^{\otimes(J-m)}V_{\pi}V_{\pi'}|0\>^{\otimes(J+k)}|1\>^{\otimes(J-k)}}{\sqrt{(J+m)!(J-m)!(J+k)!(J-k)!}}\right|\\
&=\left|\sum_{\pi\in\grp{S}_{2J}}\frac{\<\phi_0|^{\otimes(J+m)}\<\phi_{0,\perp}|^{\otimes(J-m)}V_{\pi}|0\>^{\otimes(J+k)}|1\>^{\otimes(J-k)}}{\sqrt{(J+m)!(J-m)!(J+k)!(J-k)!}}\right|\\
&=\sqrt{\frac{(J+m)!(J-m)!}{(J+k)!(J-k)!}}\left|\sum_{l=\max\{0,-k-m\}}^{\min\{J-m,J-k\}}(-1)^l{J-k\choose l}{J+k\choose J-m-l}s^{\frac{2l+k+m}2}(1-s)^{\frac{2J-m-k-2l}{2}}\right|\\
&\le\sqrt{\frac{(J+m)!(J-m)!}{(J+k)!(J-k)!}}\sum_{l}{J-k\choose l}{J+k\choose J-m-l}s^{\frac{2l+k+m}2}(1-s)^{\frac{2J-m-k-2l}{2}}\\
&=\sqrt{\frac{(J+m)!(J-m)!s^{J+k}(1-s)^{J-k}}{(J+k)!(J-k)!}}\sum_{l}{J-k\choose l}{J+k\choose J-m-l}\left(\frac{s}{1-s}\right)^{l-\frac{J-m}2}\\
&\le\sqrt{{2J\choose J+k}s^{J+k}(1-s)^{J-k}}\cdot\sqrt{{2J\choose J-m}\left(\frac{s}{1-s}\right)^{J-m}},
\end{align}
and thus we have reached Eq. (\ref{inner}). Notice that the last inequality is for the case $s\ge 1/2$. For $s<1/2$, the last term in the second square-root should be replaced by $[(1-s)/s]^{J-m}$.

\qed
\medskip

\noindent{\it Proof of Theorem \ref{lemma-proj}.}
First observe that since $\left[U_\phi^{\otimes n},P_{{\rm proj},J}\right]=0$, the error is independent of $\phi$. Then we can rewrite the error as 
\begin{align*}
\epsilon_{{\rm proj},J}\le&\frac12\left\{\left\|P_{{\rm proj},J}\,\rho_{p,J}\,P_{{\rm proj},J}-\rho_{p,J}\right\|_1+\left(1-\Tr\left[P_{{\rm proj},J}\,\rho_{p,J}\right]\right)\right\}\\
\le&\frac12\left\{2\sqrt{\left(1-\Tr\left[P_{{\rm proj},J}~\rho_{p,J}\right]\right)}+\left(1-\Tr\left[P_{{\rm proj},J}~\rho_{p,J}\right]\right)\right\}\\
\le&\frac{3}{2}\sqrt{1-\Tr\left[P_{{\rm proj},J}~\rho_{p,J}\right]},
\end{align*}
having used the gentle measurement lemma \cite{gentle,wilde} to derive the second last inequality.
We now simply have to estimate $1-\Tr\left[P_{{\rm proj},J}~\rho_{p,J}\right]$. First, we expend the quantity as
\begin{align*}
1-\Tr\left[P_{{\rm proj},J}\,\rho_{p,J}\right]=&\sum_{m=-J}^{J}\frac{p^{J+m}(1-p)^{J-m}}{N_{J}}\<J,m|_s(I- P_{{\rm proj},J})|J,m\>_s\\
=&\sum_{m=-J}^{J}\frac{p^{J+m}(1-p)^{J-m}}{N_{J}}\sum_{|k-(2s-1)J|> \frac{\sqrt{J}\log J}2}|\<J,m|_s|J,k\>|^2.
\end{align*}

Applying Lemma \ref{lemma-prop} (for $s\ge 1/2$, and the case $s<1/2$ can be treated in the same way), we have
\begin{align*}
1-\Tr\left[P_{{\rm proj},J}\,\rho_{p,J}\right]\le&\sum_{m=J-a}^{J}\frac{p^{J+m}(1-p)^{J-m}}{N_{J}}\left(\frac{s}{1-s}\right)^{J-m}{2J\choose J-m}\sum_{|k-(2s-1)J|> \frac{\sqrt{J}\log J}2}s^{k}(1-s)^{2J-k}{2J\choose k}\\
&\qquad+\sum_{m=-J}^{J-a-1}\frac{p^{J+m}(1-p)^{J-m}}{N_{J}}
\end{align*}
where $a\ge 2$ is a parameter to be specified later. We continue bounding the term as
\begin{align}
1-\Tr\left[P_{{\rm proj},J}~\rho_{p,J}\right]\le&\left(\frac{s}{1-s}\right)^a {2J\choose a}\sum_{|k-(2s-1)J|> \frac{\sqrt{J}\log J}2}s^{k}(1-s)^{2J-k}{2J\choose k}+\left(\frac{1-p}{p}\right)^{a+1}\\
\le& \frac{2(2J)^a}{a!}\exp\left[-\frac{\log^2 J}{4}+a\ln\left(\frac{s}{1-s}\right)\right]+\left(\frac{1-p}{p}\right)^{a+1}\\
\le &\exp\left[-\frac{\ln^2 J}{4\ln^2 2}+a\ln (2J)+a\ln\left(\frac{s}{1-s}\right)\right]+\left(\frac{1-p}{p}\right)^{a+1},\label{bound-term}
\end{align}
having used Hoeffding's bound and $a\ge 2$. 
Choosing, for instance, $a=\lfloor(\ln J)/4\rfloor$ guarantees that 
\begin{align*}
\epsilon_{{\rm proj},J}&\le\frac{3}{2}\sqrt{\exp\left[-\frac{\ln^2 J}{4\ln^2 2}+\frac{\ln^2 J}{4}+\frac{\ln J}{4}\ln\left(\frac{s}{1-s}\right)\right]+\left(\frac{1-p}{p}\right)^{(\ln J)/4+1}}\\
&\le\frac{3}{2}\left(\frac{1-p}{p}\right)^{(\ln J)/8}+O\left(J^{-\frac18\ln J}\right)\\
&= (3/2)J^{-\frac18\ln\left(\frac{p}{1-p}\right)}+O\left(J^{-\frac18\ln J}\right).
\end{align*}

\qed

\subsection{Supplementary Note 2: bound on the storage error.}\label{app:inner}

\medskip

Here we prove the compressor mentioned in the main text has a vanishing error. First, notice that the  trace distance between the original state $\rho_{\phi,p}^{\otimes n}$ and the output of the compression protocol, denoted as $\rho_{\phi,p,n}'$, is 
\begin{align}
\epsilon_{\phi,p}&=\frac12\left\|\rho'_{\phi,p,n}-\rho_{\phi,p}^{\otimes n}\right\|_1\nonumber\\
&=\frac12\left\|\sum_J q_J|J\>\<J|\otimes\left[\map{P}_{{\rm proj},J}(\rho_{\phi,p,J})-\rho_{\phi,p,J}\right]\otimes\frac{I_{m_J}}{m_J} \right\|_1\nonumber\\
&=\sum_J q_J\cdot\epsilon_{{\rm proj},J},\label{easy-error1}\qquad\qquad\epsilon_{{\rm proj},J}=\frac12\left\|\map{P}_{{\rm proj},J}(\rho_{\phi,p,J})-\rho_{\phi,p,J}\right\|_1.
\end{align}
Since we already have a bound for $\epsilon_{{\rm proj},J}$, the only ingredient needed for the error bound is the concentration property of $q_J$. Notice that the probability distribution $q_J$ in Eq. (\ref{decomp}) has the explicit form \cite{universal}
\begin{align}
q_{J}=\frac{2J+1}{2J_0}&\left[  B\left(\frac n2 +  J+1 \right) -B\left(\frac n2 - J \right)\right]\label{qJ}
\end{align}
where $B(k)=p^{k}(1-p)^{n-k}{n\choose k}$ and $J_0 = (p-1/2)(n+1)$, which is a Gaussian distribution concentrated in an interval of width $O(\sqrt{n})$ around $J_0$ when $n$ is large. Picking an interval much larger than $\sqrt{n}$, we get from Hoeffding's inequality that 
\begin{align}
\sum_{|J-J_0|\le n^{2/3}}q_J\ge 1-2\exp\left[\frac{2n^{1/3}}{p^2}\right].
\end{align}
Substituting the above inequality into Eq. (\ref{easy-error1}), we have
\begin{align*}
\epsilon_{\phi,p}&\le \max_{|J-J_0|\le n^{2/3}}\epsilon_{{\rm proj},J}+\sum_{|J-J_0|> n^{2/3}} q_J\\
&\le\frac32 \left(\frac{2}{(2p-1)n}\right)^{\frac18\ln\left(\frac{p}{1-p}\right)}+O\left(n^{-\frac18\ln n}\right).
\end{align*}
Finally, substituting $p=(1+e^{-\gamma t})/2$ into the above inequality, the error for the compressor in the depolarizing model is bounded by
\begin{align*}
\epsilon_{\gamma,t}&\le\frac32 \left(\frac{2e^{\gamma t}}{n}\right)^{\frac18\ln\cosh\frac{\gamma t}{2}}+O\left(n^{-\frac18\ln n}\right).
\end{align*}

From the above bound, we can directly get the bound for the overall error of compression in a quantum stopwatch as
\begin{align}\label{error}
\epsilon^{(k)}_{T,\gamma}\le  \frac{3k}2 \left(\frac{2e^{\gamma T}}{n}\right)^{\frac18\ln\coth\frac{\gamma T}{2}}  +  O\left(k  n^{-\frac18\ln n}\right)  \, ,
\end{align}
where $k$ is the number of events and $T$ is the total duration.

\subsection{Supplementary Note 3: coherent versus incoherent protocols.}
Here we compare the accuracy of the coherent protocols with the accuracy of incoherent protocols using repeated measurements. 

Suppose that we  want to measure the total duration of  $k$  intervals, each of which has length $T/k$. 
In the coherent protocol, each clock qubit starts with a pure state and ends up in a mixed state with maximum eigenvalue depending on the total duration $T$ as  $p(T,\gamma)=(1+e^{-\gamma T})/2$. 
The inaccuracy for the quantum stopwatch thus has leading order $1/\sqrt{nF_{\rm loc}}$, where   \begin{align}
\nonumber  F_{\rm loc}   =   & 1-\gamma^2-\sqrt{1-e^{-2\gamma T}}+\frac{\gamma^2}{\sqrt{1-e^{-2\gamma T}}}\, 
 \end{align}   
 is the classical Fisher information.   On the other hand, an incoherent protocol involves $k$ measurements of time and  $k$ initializations of the clock qubits,  resulting in an  of leading order $\sqrt{k/(nF'_{\rm loc})}$, where $F'_{\rm loc}$ is the classical Fisher information of the individual  step, given by
  \begin{align}
\nonumber  F'_{\rm loc}   =   &   1-\gamma^2-\sqrt{1-e^{-2\gamma T/k}}+\frac{\gamma^2}{\sqrt{1-e^{-2\gamma T/k}}} \, .
 \end{align} 
 
Comparing the two error terms, we conclude that the coherent protocol outperforms the incoherent  one when $F_{\rm loc}\ge (F'_{\rm loc})/k$, namely that
\begin{align*}
k\left[ 1-\gamma^2-\sqrt{1-e^{-2\gamma T}}+\frac{\gamma^2}{\sqrt{1-e^{-2\gamma T}}}\right]\ge  1-\gamma^2-\sqrt{1-e^{-2\gamma T/k}}+\frac{\gamma^2}{\sqrt{1-e^{-2\gamma T/k}}}.
\end{align*}
Finally we comment on the case of large $k$, fixing $\gamma$ and $T$. In this case, the left hand side of the above inequality scales linearly in $k$, while the right hand side scales as $\sqrt{k}$ (which can be seen by taking the Taylor expansion of the last term). Therefore, the left hand side dominates the right hand side in the large $k$ limit, and thus the coherent protocol is always better for large $k$.

\subsection{Supplementary Note 4:  inaccuracy of the local strategy for unknown $\gamma$.}
Here we analyze the precision of local time measurement strategy when $\gamma$ is unknown and is treated as a nuisance parameter, which is not directly of interest but affects the analysis of our estimation.
For this analysis, we parameterize the distribution of our interest as $P_{T,\gamma}(\tau)$
because the dependence on $\gamma$ is crucial in the following discussion.
Then, we introduce Fisher information matrix $ F_{\rm loc}$ defined by
$$\left( F_{\rm loc}\right)_{x,y}:=\int_0^{2\pi}d\tau~ 
\frac{\partial \log P_{T,\gamma}(\tau)}{\partial x}\cdot
\frac{\partial \log P_{T,\gamma}(\tau)}{\partial y}\,P_{T,\gamma}(\tau),$$
where $x,y\in\{T,\gamma\}$. And we need to evaluate $\left( F_{\rm loc}^{-1}\right)_{TT}$, namely the $(T,T)$-entry of the inverse Fisher information matrix $ F_{\rm loc}^{-1}$.


To simplify the calculation, we first adapt a re-parameterization of the nuisance parameter from $\gamma$ to $p=(e^{-\gamma T}+1)/2$. 
The $T$ component $\left( F_{\rm loc}^{-1}\right)_{TT}$ of the inverse Fisher information is invariant under such a re-parameterization of the nuisance parameter, which can be proved as follows.

The inverse Fisher information $ F_{\rm loc}^{-1}$ with respect to the parametrization $(T,\gamma)$  is given by
$$  F_{\rm loc}^{-1}=\left[\begin{matrix} \left(  F_{\rm loc}\right)_{TT} & \left(   F_{\rm loc}\right)_{T,\gamma} \\ \left(  F_{\rm loc}\right)_{\gamma,T} & \left(   F_{\rm loc}\right)_{\gamma \gamma}\end{matrix}\right]^{-1}.$$
Now, consider a re-parameterization of the nuisance parameter $\gamma\to p$, and the new Fisher information matrix is
$$ F_{{\rm loc},p}^{-1}=A^T   F_{\rm loc}^{-1} A$$
where $$A=\begin{pmatrix} 1 & 0 \\ \frac{\partial \gamma}{\partial T} & \frac{\partial \gamma}{\partial p}\end{pmatrix}$$ is the Jacobi matrix. Therefore, we have 
\begin{align*}
 F_{\rm loc}^{-1}=A\left( F_{{\rm loc},p}^{-1}\right)A^T
\end{align*}
and it can be checked by straightforward calculation that 
$$ \left(  F_{\rm loc}^{-1}\right)_{TT}= \left( F_{{\rm loc},p}^{-1}\right)_{TT}.$$

It is therefore enough to calculate the Fisher information matrix for the parameterization $(T,p)$, when the input state is  $\rho_{T,p}=p |\varphi_T\>\<\varphi_T|+\left(1-p\right)\, |\varphi_{T}^\perp\>\<\varphi_{T}^\perp|$ with $|\varphi_{T}\>=(|0\>+e^{-iT}|1\>)/\sqrt{2}$ and
$|\varphi_{T}^\perp\>=(|0\>-e^{-iT}|1\>)/\sqrt{2}$.
That is, in the following, we employ the parametrization $P_{T,p}(\tau)$ by the pair $(T,p)$ for the probability of the measurement to yield outcome $\tau$. 
When the input state is $\rho_{T,p}$, 
it can be expressed as
 \begin{align}
P_{T,p}(\tau)&:=
\Tr \left[\rho_{T,p}M_{\tau}\right]\nonumber\\
&=\frac{1}{2\pi}(1+(2p-1) \cos (\tau-T)).
\end{align}
The Fisher information of $P_{T,p}(\tau)$ can be derived from a lengthy  but straightforward calculation, and we have
\begin{align}
\left(F_{{\rm loc},p}\right)_{TT}&=
\int_0^{2\pi} \left(\frac{\partial \log P_{T,p}(\tau)}{\partial T}\right)^2 
P_{T,p}(\tau) d\tau \nonumber \\
&=1-\sqrt{1-(2p-1)^2}\nonumber\\
&=1-\sqrt{1-e^{-2\gamma T}}.\label{QFI-nuisance-loc}
\end{align}
To evaluate the inverse of the Fisher information matrix, we also need to calculate the off-diagonal elements.
We define the logarithmic likelihood derivative $l_{p}(\tau)$ 
with respect to $p$ as
\begin{align*}
l_{p}(\tau)&:=\frac{\partial \log P_{T,p}(\tau)}{\partial p}\\
&= \frac{2 \cos (\tau-T)
}{1+(2p-1) \cos (\tau-T)}.
\end{align*}
Then the off-diagonal components $F_{T,p}$ and $ F_{p,T}$ are given as
\begin{align}
\left( F_{{\rm loc},p}\right)_{T,p}=\left( F_{{\rm loc},p}\right)_{p,T}:=\int_0^{2\pi} l_{p}(\tau) 
\frac{\partial \log P_{T,p}(\tau)}{\partial T}P_{T,p}(\tau) d\tau =0.
\end{align}
Since the off-diagonal components are zero, we have 
$$\left(   F_{\rm loc}^{-1}\right)_{TT}=\left(   F_{{\rm loc},p}^{-1}\right)_{TT}=1/\left(  F_{{\rm loc},p}\right)_{TT}.$$
The value of $\left(   F_{\rm loc}^{-1}\right)_{TT}$ is thus given by the reciprocal of the r.h.s. of Eq. (\ref{QFI-nuisance-loc}).

\end{document}